\journal{Neural Networks}
\newcommand{\bra}[1]{{\left\langle{#1}\right\vert}}
\newcommand{\ket}[1]{{\left\vert{#1}\right\rangle}}
\newcommand{\qw}[1][-1]{\ar @{-} [0,#1]}
\newcommand{\qwx}[1][-1]{\ar @{-} [#1,0]}
\newcommand{\control}{*!<0em,.025em>-=-<.2em>{\bullet}}
\newcommand{\controlo}{*+<.01em>{\xy -<.095em>*\xycircle<.19em>{} \endxy}}
\newcommand{\ctrl}[1]{\control \qwx[#1] \qw}
\newcommand{\ctrlo}[1]{\controlo \qwx[#1] \qw}
\newcommand{\multigate}[2]{*+<1em,.9em>{\hphantom{#2}} \POS [0,0]="i",[0,0].[#1,0]="e",!C *{#2},"e"+UR;"e"+UL **\dir{-};"e"+DL **\dir{-};"e"+DR **\dir{-};"e"+UR **\dir{-},"i" \qw}
\newcommand{\ghost}[1]{*+<1em,.9em>{\hphantom{#1}} \qw}
\newcommand{\lstick}[1]{*!R!<.5em,0em>=<0em>{#1}}
\newcommand{\Qcircuit}{\xymatrix @*=<0em>}
\newcommand{\field}[1]{\mathbb{#1}}
\newcommand{\vs}[1]{\mathbb{#1}}
\newcommand{\qop}[1]{\textsf{#1}}
\newtheorem{theorem}{Theorem}[section]
\newtheorem{proposition}[theorem]{Proposition}
\newenvironment{proof}[1][Proof]{\begin{trivlist}
\item[\hskip \labelsep {\bfseries #1}]}{\end{trivlist}}
\newenvironment{definition}[1][Definition]{\begin{trivlist}
\item[\hskip \labelsep {\bfseries #1}]}{\end{trivlist}}
\begin{document}

\begin{frontmatter}

\title{Quantum perceptron over a field and neural network architecture selection in a quantum computer}

\author[mymainaddress,mysecondaryaddress]{Adenilton Jos\'e da Silva}
\cortext[mycorrespondingauthor]{A. J. da Silva}
\ead{adenilton.silva@ufrpe.br}

\author[mysecondaryaddress]{Teresa Bernarda Ludermir}

\author[mymainaddress]{Wilson Rosa de Oliveira}


\address[mymainaddress]{Departamento de Estat\'{i}stica e Inform\'{a}tica\\ Universidade 
Federal Rural de Pernambuco}
\address[mysecondaryaddress]{Centro de Inform\'{a}tica, Universidade Federal de Pernambuco\\
Recife, Pernambuco, Brasil}

\begin{abstract}
In this work, we propose a quantum neural network named quantum perceptron over a field (QPF).  Quantum computers are not yet a reality and the models and algorithms proposed in this work cannot be simulated in actual (or classical) computers. QPF is a direct generalization of a classical perceptron and solves some drawbacks found in previous models of quantum perceptrons. We also present a learning algorithm named Superposition based Architecture Learning algorithm (SAL) that optimizes the neural network weights and architectures. SAL searches for the best architecture in a finite set of neural network architectures with linear time over the number of patterns in the training set. SAL is the first learning algorithm to determine neural network architectures in polynomial time. This speedup is obtained by the use of quantum parallelism and a non-linear quantum operator.
\end{abstract}

\begin{keyword}
Quantum neural networks \sep Quantum computing \sep Neural networks
\end{keyword}

\end{frontmatter}


\section{Introduction}
\label{sec:introduction}

The size of computer components reduces each year and quantum effects have to be eventually considered in computation with future hardware. The theoretical possibility of quantum computing initiated with Benioff~\cite{benioff} and Feynman~\cite{Feynman} and the formalization of the first quantum computing model was proposed by Deutsch in 1985~\cite{deutsch1985quantum}. The main advantage of quantum computing over classical computing is the use of a principle called superposition which allied with the linearity of the operators allows for a powerful form of parallelism  to develop algorithms more efficient than the known classical ones. For instance, the Grover's search algorithm~\cite{grover:96} and Shor's factoring algorithm~\cite{shor:97} overcome any known classical algorithm.

Quantum computing has recently been used in the development of new machine learning techniques as quantum decision trees~\cite{Farhi1998}, artificial neural networks~\cite{Narayanan,panella:09,daSilva:12}, associative 
memory~\cite{ventura:98,PhysRevLett.87.067901}, and inspired the development of novel evolutionary algorithms for continuous optimization problems~\cite{Hsu_2013,Duan2010}. There is an increasing interest in quantum machine learning and in the quantum neural network area~\cite{Schuld2014}. This paper proposes a quantum neural network named Quantum Perceptron over a Field (QPF) and investigates the use of quantum computing techniques to design a learning algorithm for neural networks. Empirical evaluation of QPF and its learning algorithm needs of a quantum computer with thousands of qubits. Such quantum computer is not available nowadays and an empirical analysis of the QPF and its learning algorithm is not possible with current technology.

Artificial neural networks are a universal model of computation~\cite{doi:10.1142/S0129065714500294} and have several applications in real life problems. For instance, in the solution of combinatorial optimization problems~\cite{doi:10.1142/S0129065714400061},  pattern recognition~\cite{doi:10.1142/S0129065714500257}, 
but have some problems as the lack of an algorithm to determine optimal architectures~\cite{doi:10.1142/S0129065703001467}, memory capacity and high cost learning algorithms~\cite{doi:10.1142/S0129065701000655}.

Notions of Quantum Neural Networks have been put forward since the nineties~\cite{kak:95}, but a precise definition of what is a quantum neural network that integrates neural computation and quantum computation is a non-trivial open problem~\cite{Schuld2014}.
To date, the proposed models in the literature are really just quantum inspired in the sense that despite using a quantum representation of data and quantum operators, in a way or another some quantum principles are violated usually during training. Weights adjustments need measurements (observation) and updates.

Research in quantum neural computing is unrelated, as stated in~\cite{Schuld2014}: 
\begin{quote}
``QNN research remains an exotic conglomeration of different ideas under the umbrella of quantum information''.
\end{quote}
and there is no consensus of what are the components of a quantum neural network. Several models of quantum neural networks have been proposed and they present different conceptual models. In some models a quantum  neural network is described as a physical  device~\cite{Narayanan}; as a model only inspired in quantum computing~\cite{kouda:05}; or as a mathematical model that explores quantum computing principles~\cite{zhou:07,panella:09,daSilva:12,Schuld2014walks}. We follow the last approach and assume that our quantum neural network model would be implemented in a quantum computer that follows the quantum principles as \emph{e.g.} described in~\cite{nielsen:00}. We assume that our models are implemented in the quantum circuit  model of Quantum Computing~\cite{nielsen:00}.

Some advantages of quantum neural models over the classical models are the exponential gain in memory capacity~\cite{trugenberger:02}, quantum neurons can solve nonlinearly separable problems~\cite{zhou:07}, and a nonlinear quantum learning algorithm with polynomial time over the number of patterns in the data set is presented in~\cite{panella:09}. However, these quantum neural models cannot be viewed as a direct generalization of a classical neural network and have some limitations presented in Section~\ref{sec:related_works}. Quantum computing simulation has exponential cost in relation to the number of qubits. Experiments with benchmarks and real problems are not possible because of the number of qubits necessary to simulate a quantum neural network.

The use of artificial neural networks to solve a problem requires considerable time for choosing parameters and neural network architecture~\cite{Almeida2010}. The architecture design is extremely important in neural network applications because a neural network with a simple architecture may not be capable of performing the task. On the other hand, a complex architecture can overfit the training  data~\cite{doi:10.1142/S0129065703001467}. The definition of an algorithm to determine (in a finite set of architectures) the best neural network architecture (minimal architecture  for a given learning task that can learn the training dataset) efficiently is an open problem. The objective of this paper is to show that with the supposition of non-linear quantum computing~\cite{panella:09,panella2009neurofuzzy,PhysRevLett.81.3992} we can determine an architecture that can learn the training data in linear time with relation to the number of patterns in the training set. To achieve this objective, we propose a quantum neural network that respect the principles of quantum computation, neural computing and generalizes the classical perceptron. The proposed neuron works as a classical perceptron when the weights are in the computational basis, but as quantum perceptron when the weights are in superposition. We propose a neural network learning algorithm which uses a non-linear quantum operator~\cite{panella:09,PhysRevLett.81.3992} to perform a global search in the space of weights and architecture of a neural network. The proposed learning algorithm is the first quantum algorithm performing this kind of optimization in polynomial time and presents a framework to develop linear quantum learning algorithms to find near optimal neural network architectures.

 The remainder of this paper is divided into 6 Sections. In Section~\ref{sec:scope} we describe models that are out of the scope of this work. In Section ~\ref{sec:qc} 
we present preliminary concepts of quantum computing necessary to understand this work.  In Section~\ref{sec:related_works} we present related works. Section~\ref{sec:quantum_neuron} presents main results of this paper. We define the new model of a quantum neuron named quantum perceptron over a field that respect principles of quantum and neural computing. Also in Section~\ref{sec:quantum_neuron} we propose a quantum learning algorithm for neural networks that determines a neural network architecture that can learn the train set with some desired accuracy. Section~\ref{sec:discussion} presents a discussion. Finally, Section~\ref{sec:conclusion} is the conclusion.

\section{Out of scope}
\label{sec:scope}

Quantum computing and neural networks are multidisciplinary research fields. In 
this way, the quantum neural computing research is also multidisciplinary and 
concepts from physics, mathematics and computer science are used. Probably 
because of this multidisciplinary characteristic there are completely different 
concepts named quantum neural networks. In this Section, we point some models 
that are out of the scope of this work.
 
\subsection{Quantum inspired neural networks} 
Neural networks whose definition is based on quantum computation, but that 
works in a classical computer as in \cite{kouda:05,Zhou2006,Li201381}  are named 
in this work as Quantum Inspired Neural Networks. Quantum inspired neural 
networks are not real quantum models. Quantum inspired models are classical 
neural networks that are inspired in quantum computing exactly as there are 
combinatorial optimization methods inspired in ant colony or bird swarm. 

In~\cite{kouda:05} a complex neural network named qubit neural network whose 
neurons acts in the phase of the input values is proposed. The qubit neural 
network has its functionality based in quantum operation, but it is a classical 
model and can be efficiently simulated in a classical computer.

Another quantum inspired models is defined in~\cite{zhou99} where the activation 
function is a linear combination of sigmoid functions. This linear combination 
of activation functions is inspired in the concept of quantum superposition, but 
these models can be efficiently simulated by a classical computer.

Quantum inspiration can bring useful new ideas and techniques for neural network 
models and learning algorithms design. However, quantum inspired neural networks 
are  out of the scope of this work.

\subsection{Physical device quantum neural networks} 

Devices that implement a quantum neural network are proposed in 
\cite{Narayanan,Behrman}. In this work, these models are named physical device 
quantum neural network. The main problem of this kind of proposal is the 
hardware dependence. Scalable quantum computers are not yet a reality and when 
someone build a quantum computer we do not know which hardware will be used.

 In~\cite{Narayanan} a quantum neural network is represented by the architecture 
of a double slit experiment where input patterns are represented by photons, 
neurons are represented by slits, weights are represented by waves and screen 
represents output neurons. In~\cite{Behrman} a quantum neural network is 
represented by a quantum dot molecule evolving in real time. Neurons are 
represented by states of molecules, weights are the number of excitations that 
are optically controlled, inputs are the initial state of the quantum dot 
molecules and outputs are the final state of the dot molecules. 

Physical device quantum neural networks are real quantum models. This kind of 
quantum neural networks needs of specific hardware and is out of the scope of 
this work.

\subsection{Quantum inspired algorithms} 
In this work algorithms whose development uses ideas from quantum computing, but 
run in a classical computer are named quantum inspired algorithms. For instance, 
there are several quantum inspired evolutionary algorithm proposed in 
literature~\cite{kasabov2011quantum,platel2009quantum,han2002quantum,tirumala2014quantum,vellasco,vellasco2}.
This kind of algorithm uses quantum inspiration to define better classical 
algorithms, but intrinsic quantum properties are not used. Quantum inspired 
algorithms are out of the scope of this work.

\section{Quantum computing}
\label{sec:qc}
In this Section, we perform a simple presentation of quantum computing with the necessary concepts to understand the following sections. As in theoretical classical computing we are not interested in how to store or physically represent a quantum bit. Our approach is a mathematical. We deal with  how we can abstractly compute with quantum bits or design abstract quantum circuits and models. We take as a guiding principle that when a universal quantum computer will be at our disposal, we could implement the proposed quantum neural models.

We define a quantum bit as unit complex bi-dimensional vector in the vector space $\field{C}^2$. In the quantum computing literature a vector is represented by the Dirac’s notation
 $\ket{\cdot}$. The computational basis is the set $\left\{\ket{0},\ket{1}\right\}$, 
 where the vectors $\ket{0}$ and $\ket{1}$ can be represented as in Equation~\eqref{eq:compbasis}. 
 
 \begin{equation}
 \ket{0} = \begin{bmatrix}1 \\ 0\end{bmatrix} \mbox{ and } \ket{1} = \begin{bmatrix}0 \\ 1\end{bmatrix}
 \label{eq:compbasis}
 \end{equation}
 A quantum bit $\ket{\psi}$, \emph{qubit}, 
 is a vector in $\field{C}^2$ that has a unit length as described in Equation~\eqref{eq:qbit}, 
 where $|\alpha|^2+|\beta|^2=1$, $\alpha,\beta\in\field{C}$.
 \begin{equation}
 \ket{\psi} = \alpha\ket{0} + \beta \ket{1}
 \label{eq:qbit}
 \end{equation}
 
 While in classical computing there are only two possible bits, 0 or 1,  in quantum computing there are an infinitely many  quantum bits,  a quantum bit can be a linear combination (or superposition) of $\ket{0}$ and $\ket{1}$. 
The inner product of two qubits $\ket{\psi}$ and $\ket{\theta}$ is denoted as $\langle\psi|\theta\rangle$ and the symbol $\bra{\psi}$ is the transpose conjugate of the vector $\ket{\psi}$.
 
 Tensor product $\otimes$ is used to define multi-qubit systems. On the vectors, if
$\ket{\psi}=\alpha_0 \ket{0}+\alpha_1 \ket{1}$ and $\ket{\phi}=\beta_0 
\ket{0}+\beta_1 \ket{1}$, then the tensor product $\ket{\psi}\otimes \ket{\phi}$ is equal to the vector  $\ket{\psi \phi}=\ket{\psi} \otimes \ket{\phi} =$ 
$ (\alpha_0 \ket{0}+\alpha_1 \ket{1})\otimes (\beta_0 \ket{0}+\beta_1 
\ket{1})=\alpha_0 \beta_0 \ket{00}+ \alpha_0 \beta_1 \ket{01}+\alpha_1 \beta_0 
\ket{10}+\alpha_1 \beta_1 \ket{11}$. Quantum states representing $n$ qubits are in a
$2^n$ dimensional complex vector space. On the spaces, let $\field{X}\subset\field{V}$ and $\field{X}'\subset\field{V}'$ be basis of respectively vector spaces $\field{V}$ and $\field{V}'$. The tensor product $\field{V}\otimes \field{V}'$  is the vector space obtained from the basis
$\left\{ \ket{b} \otimes \ket{b'}; \ket{b} \in \field{X} \mbox{ and } \ket{b'}\in \field{X}' \right\}$. 
The symbol $\field{V}^{\otimes n}$ represents a tensor product   
$\field{V}\otimes \cdots \otimes \field{V}$ with $n$ factors.

 Quantum operators over $n$ qubits are represented by $2^n \times 2^n$ unitary matrices. An $n\times n$ matrix $\qop{U}$ is unitary if $\qop{U}\qop{U}^\dag=\qop{U}^\dag\qop{U}=\qop{I}$, where $\qop{U}^\dag$ is the transpose conjugate of $\qop{U}$. For instance, identity $\qop{I}$, the flip $\qop{X}$, and the Hadamard $\qop{H}$ operators are important quantum operators over one qubit and they are described in Equation~\eqref{eq:qo}. 
\begin{equation}
\begin{array}{lll}
\qop{I} = \begin{bmatrix}
1 & 0\\
0 & 1
\end{bmatrix}
&
\qop{X} = \begin{bmatrix}
0 & 1\\
1 & 0
\end{bmatrix}
&
\qop{H} = \frac{1}{\sqrt{2}}\begin{bmatrix}
1 & 1\\
1 & -1
\end{bmatrix}
\end{array}
\label{eq:qo}
\end{equation}
The operator described in Equation~\eqref{eq:cnot} is the \emph{controlled not} operator $\qop{CNot}$,  that flips the second qubit if the first (the controlled qubit) is the state $\ket{1}$. 

\begin{equation}
\qop{CNot} = \begin{bmatrix}
1 & 0 & 0 & 0 \\
0 & 1 & 0 & 0 \\
0 & 0 & 0 & 1 \\
0 & 0 & 1 & 0 \\
\end{bmatrix}
\label{eq:cnot}
\end{equation}

 Parallelism is one of the most important properties of quantum computation used in this paper. If $f:\mathbb{B}^m \rightarrow \mathbb{B}^n$ is a Boolean function, $\mathbb{B}=\{0,1\}$, one can define a quantum operator 
\begin{equation}
\qop{U}_f:\left(\mathbb{C}^2\right)^{\otimes n+m} \rightarrow \left(\mathbb{C}^2\right)^{\otimes n+m},
\label{qof}
\end{equation}
 as
  \[\qop{U}_f \ket{x,y}=\ket{ x,y\oplus f(x)},\]
where $\oplus$ is the bitwise $\qop{xor}$, such that the value of $f(x)$ can be recovered as 
 \[\qop{U}_f \ket{x,0}=\ket{ x,f(x)}.\] 
The operator $\qop{U}_f$ is sometimes called the \emph{quantum oracle} for $f$. Parallelism occurs when one applies the operator $\qop{U}_f$ to a state in superposition as e.g. described in Equation~\eqref{eq:paral}. 
\begin{equation}
\qop{U}_f \left( \sum_{i=0}^n \ket{x_i,0}\right) = \sum_{i=0}^n \ket{x_i,f(x_i)}
\label{eq:paral}
\end{equation}
The meaning of  Equation~\eqref{eq:paral} is that if one applies operator
$\qop{U}_f$ to a state in superposition, by linearity, the value of $f(x_i)$, will be calculated for each $i$ simultaneously with only one single application of the quantum operator $\qop{U}_f$.   

 With quantum parallelism one can imagine that if a problem can be described with a Boolean function $f$ then it can be solved instantaneously. The problem is that one cannot direct access the values in a quantum superposition. In quantum computation a measurement returns a more limited value. The measurement of a quantum state  $\ket{\psi}=\sum_{i=1}^n \alpha_i\ket{x_i}$ in superposition will return $x_i$ with probability $|\alpha_i |^2$. With this result the state $\ket{\psi}$ collapses to state $\ket{x_i}$, i.e., after the measurement 
$\ket{\psi}=\ket{x_i}$.

\section{Related works}
\label{sec:related_works}
 Quantum neural computation research started in the nineties~\cite{kak:95}, and the models (such as \emph{e.g.} in~\cite{Narayanan,panella:09,zhou:07,Altaisky,daSilva:12,ventura:04,Liu2013144,Anguita2003763}) 
are yet unrelated. We identify two types of  quantum neural networks: 1)  models described mathematically to work on a quantum  computer (as in~\cite{Altaisky,ventura:04,zhou:07,panella:09,xuan:11,daSilva:12,Sagheer2013,Siomau2014});  2)  models described by a quantum physical device (as in~\cite{Narayanan,Behrman}). In the following subsections we describe some models of quantum neural networks.

\subsection{qANN}
 Altaisky~\cite{Altaisky} proposed a quantum perceptron (qANN). The 
qANN $N$ is described as in Equation~\eqref{eq:qann}, 
\begin{equation}
\ket{y} = \hat{F}\sum_{j=1}^n \hat{w}_j\ket{x_j}
\label{eq:qann}
\end{equation}
where $\hat{F}$ is a quantum operator over 1 qubit representing the activation function, $\hat{w}_j$ is a quantum operator over a single qubit representing the $j$-th weight of the neuron and $\ket{x_j}$ is one qubit representing the input associated with 
$\hat{w}_j$. 

 The qANN is one of the first models of quantum neural networks. It suggests a way to implement the activation function that is applied (and detailed) for instance in~\cite{panella:09}. 

 It was not described in~\cite{Altaisky} how one can implement 
Equation~\eqref{eq:qann}. An algorithm to put patterns in superposition is necessary. For instance, the storage algorithm of a quantum associative memory~\cite{trugenberger:02} can be used to create the output of the qANN. But this kind of algorithm works only with orthonormal states, as shown in Proposition~\ref{proposition:21}.

\begin{proposition}
Let $\ket{\psi}$ and $\ket{\theta}$ be two qubits with probability amplitudes 
in $\mathbb{R}$, if $\frac{1}{\sqrt{2}} \left(\ket{\psi}+ \ket{\theta}\right)$ 
is a unit  vector then $\ket{\psi}$ and $\ket{\theta}$ are 
orthogonal vectors.
\label{proposition:21}
\end{proposition}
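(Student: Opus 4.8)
The plan is to compute the squared norm of the vector $\frac{1}{\sqrt{2}}\left(\ket{\psi}+\ket{\theta}\right)$ directly by expanding the inner product, and then use the hypothesis that this vector has unit length to force the cross term to vanish. The two facts I would lean on are: first, that $\ket{\psi}$ and $\ket{\theta}$ are qubits, hence already unit vectors, so $\langle\psi|\psi\rangle=\langle\theta|\theta\rangle=1$; and second, that the probability amplitudes are real, so the inner product is symmetric, $\langle\psi|\theta\rangle=\langle\theta|\psi\rangle$, and in particular this quantity is a real number rather than a complex one.

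First I would expand
\[
\left\| \tfrac{1}{\sqrt{2}}\left(\ket{\psi}+\ket{\theta}\right)\right\|^2
= \tfrac{1}{2}\left(\langle\psi|\psi\rangle + \langle\psi|\theta\rangle + \langle\theta|\psi\rangle + \langle\theta|\theta\rangle\right).
\]
Next I would substitute $\langle\psi|\psi\rangle=\langle\theta|\theta\rangle=1$ and use symmetry of the real inner product to combine the two cross terms into $2\langle\psi|\theta\rangle$, obtaining $\tfrac{1}{2}\left(2 + 2\langle\psi|\theta\rangle\right) = 1 + \langle\psi|\theta\rangle$. Finally, since by hypothesis the left-hand side equals $1$, I conclude $\langle\psi|\theta\rangle = 0$, which is exactly the statement that $\ket{\psi}$ and $\ket{\theta}$ are orthogonal.

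There is essentially no obstacle here: the argument is a one-line computation once the bilinearity of the inner product and the unit-norm property of qubits are invoked. The only point requiring a small amount of care is the use of the reality of the amplitudes to ensure the cross terms add (rather than partially cancelling through complex phases), which is why that hypothesis appears in the statement; without it one would only get $\mathrm{Re}\,\langle\psi|\theta\rangle = 0$ rather than full orthogonality.
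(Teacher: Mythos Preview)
Your proposal is correct and follows essentially the same approach as the paper's own proof: expand the squared norm by bilinearity, substitute $\langle\psi|\psi\rangle=\langle\theta|\theta\rangle=1$, combine the cross terms, and solve for $\langle\psi|\theta\rangle=0$. If anything, your write-up is slightly more explicit than the paper's in noting that the reality of the amplitudes is what allows the cross terms to be combined as $2\langle\psi|\theta\rangle$.
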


\begin{proof}
Let $\ket{\psi}$ and $\ket{\theta}$ be qubits and suppose that $\frac{1}{\sqrt{2}} \left(\ket{\psi}+ \ket{\theta}\right)$ is a unit vector. Under these conditions 
\begin{equation*}
\begin{split}
\frac{1}{2}\left(\ket{\psi}+\ket{\theta},\ket{\psi}+\ket{\theta}\right) = 1 \Rightarrow \\
\frac{1}{2} \left(\left< \psi|\psi \right> +  \left<\psi|\theta\right>+ \left<\theta|\psi\right>+\left<\theta|\theta\right>\right) =1 
\end{split}
\end{equation*}
Qubits are unit vectors, then
\begin{equation}
\frac{1}{2} \left(2 +  2\left<\psi|\theta\right>\right) =1 
\end{equation}
and $\ket{\psi}$ and $\ket{\theta}$ must be orthogonal vectors.
\end{proof}

A learning rule for the qANN has been proposed and it is shown that the learning rule drives the perceptron to the desired state $\ket{d}$~\cite{Altaisky} in the particular case described in Equation~\eqref{eq:learningaltaisky} where is supposed that $F=I$. But this learning rule does not preserve unitary operators~\cite{daSilva:12}.

\begin{equation}
\hat{w}_j(t+1) = \hat{w}_j(t) + \eta \left( \ket{d}-\ket{y(t)}\right)\bra{x_j}
\label{eq:learningaltaisky}
\end{equation}

In~\cite{Sagheer2013} a quantum perceptron named Autonomous Quantum Perceptron Neural Network (AQPNN) is proposed. This model has a learning algorithm that can learn a problem in a small number of iterations when compared with qANN and these weights are represented in a quantum operator as the qANN weights.

\subsection{qMPN}
Proposition~\ref{proposition:21} shows that with the supposition of unitary evolution, qANN with more than one input is not a well defined quantum operator. Zhou and Ding proposed a new kind of quantum perceptron~\cite{zhou:07} which they called  as quantum M-P neural network (qMPN).  The weights of qMPN are stored in a single squared matrix W that represents a quantum operator. We can see the qMPN as a generalized single weight qANN, where the weight is a quantum operator over any number of qubits. The qMPN is described in Equation~\eqref{eq:qmpn}, where $\ket{x}$ is an input with $n$ qubits, $W$ is a quantum operator over $n$ qubits and $\ket{y}$ is the output. 

\begin{equation}
\ket{y} = W \ket{x}
\label{eq:qmpn}
\end{equation}

 They also proposed a quantum learning algorithm for the new model. The learning algorithm for qMPN is described in Algorithm~\ref{alg:learningqmp}. The weights update rule of Algorithm~\ref{alg:learningqmp} is described in Equation~\eqref{eq:learningQMPN},
\begin{equation}
w_{ij}^{t+1} = w_{ij}^t + \eta \left(\ket{d}_i - \ket{y}_i\right)\ket{x}_j
\label{eq:learningQMPN}
\end{equation}
 where $w_{ij}$ are the entries of the matrix 
$W$ 
and $\eta$ is a learning rate.

\begin{algorithm}[H]
    \begin{algorithmic}[1]
        \STATE Let $W(0)$ be a weight matrix\\
        \STATE Given a set of quantum examples in the form 
        $\left(\ket{x},\ket{d}\right)$, 
        where $\ket{x}$ is an input and $\ket{d}$ is the desired output\\
        \STATE
        Calculate $\ket{y}=W(t) \ket{x}$, where $t$ is the iteration number\\
        \STATE Update the weights following the learning rule described in 
        Equation~\eqref{eq:learningQMPN}.\\
        \STATE Repeat steps 3 and 4 until a stop criterion is met.
    \end{algorithmic}
    \caption{Learning algorithm qMPN}
    \label{alg:learningqmp}
\end{algorithm}

The qMPN model has several limitations in respect to the principles of quantum computing. qMPN is equivalent to a single layer neural network and its learning algorithm leads to non unitary neurons as shown in~\cite{daSilva:14}.

\subsection{Neural network with quantum architecture}
 In the last subsections the neural network weights are represented by quantum operators and the inputs are represented by qubits. In the classical case, inputs and free parameters are real numbers. So one can consider to use qubits to represent inputs and weights. This idea was used, for instance, in~\cite{ventura:04,panella:09}. In~\cite{ventura:04} a detailed description of the quantum neural network is not presented.

 In~\cite{panella:09}  a Neural Network with Quantum Architecture (NNQA) based on a complex valued neural network named \emph{qubit neural network}~\cite{kouda:05} is proposed. Qubit neural network is not a quantum neural network being  just inspired in quantum computing. Unlike previous models~\cite{Altaisky,zhou:07}, NNQA uses fixed quantum operators and  the neural network configuration is represented by a string of qubits. This approach is very similar to the classical case, where the neural network configuration for a given architecture is a string of bits. 

 Non-linear activation functions are included in NNQA in the following way. Firstly is performed a discretization of the input and output space, the scalars are represented by Boolean values. In doing so a neuron is just a Boolean function 
$f$ and the quantum oracle  operator for $f$, $\qop{U}_f$, is used to implement the function $f$ acting on the computational basis.

 In the NNQA all the data are quantized with $N_q$ bits. A synapses of the NNQA is a Boolean function $f_0:B^{2N_q}\rightarrow B^{N_q}$. A synapses of the NNQA is a Boolean function described in Equation~\eqref{eq:qaansynapsis}.

\begin{equation}
z = arctan\left( \frac{\sin(y) + \sin(\theta)}{\cos(y) + \cos(\theta)}\right)
\label{eq:qaansynapsis}
\end{equation}

The values $y$ and $\theta$ are angles in the range $\left[-\pi/2,\pi/2\right]$ 
representing the argument of a complex number, which are quantized as described in Equation~\eqref{eq:qanndiscretdata}. The representation of the angle 
$\beta$ is the binary representation of the integer $k$.
\begin{equation}
\beta = \pi \left( -0.5 + \frac{k}{2^{Nq}-1}\right), k = 0, 1, \cdots, 2^{Nq}-1
\label{eq:qanndiscretdata}
\end{equation}

\begin{proposition}
The set $F=\left\{\beta_k |\beta_k=\pi\cdot\left(-0.5+\right.\right.$ 
$\left.\left. k/(2^{Nq}-1)\right),k=0,\cdots,2^{Nq}-1\right\}$  with canonical 
addition and multiplication is not a field.
\label{proposition:qannfield}
\end{proposition}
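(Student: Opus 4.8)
The plan is to show that $F$ violates one of the closure axioms defining a field — concretely, closure under the canonical addition inherited from $\field{R}$. The first step is a range observation: for $k = 0, 1, \ldots, 2^{N_q}-1$ we have $0 \le k/(2^{N_q}-1) \le 1$ (assuming, as the construction requires, that $N_q \ge 1$ so $2^{N_q}-1 \ge 1$), hence every element satisfies $-\pi/2 \le \beta_k \le \pi/2$, with the two endpoints attained at $k=0$ and $k = 2^{N_q}-1$. Thus $F \subseteq [-\pi/2,\pi/2]$.

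Next I would exhibit an explicit failure of closure. Taking $k=0$ gives $\beta_0 = -\pi/2 \in F$, and its sum with itself under canonical addition is $\beta_0 + \beta_0 = -\pi$. Since $-\pi < -\pi/2$, this sum lies outside $[-\pi/2,\pi/2]$ and therefore is not an element of $F$. Hence $F$ is not closed under addition, so $(F,+,\cdot)$ cannot be a field. (The same computation already settles the degenerate case $N_q = 1$, where $F = \{-\pi/2, \pi/2\}$.)

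It is worth recording two alternative routes that reach the same conclusion and also rule out the required identity elements. First, the additive identity of $\field{R}$ is $0$, and $0 \in F$ would force $k/(2^{N_q}-1) = 1/2$, i.e. $2k = 2^{N_q}-1$, which is impossible since the right-hand side is odd; so $F$ has no additive identity. Second, each $\beta_k$ is a rational multiple of $\pi$, so $1 \in F$ would force $-0.5 + k/(2^{N_q}-1) = 1/\pi$, which is impossible because the left-hand side is rational while $1/\pi$ is irrational; so $F$ has no multiplicative identity either. Any one of these observations proves the statement.

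I do not anticipate a genuine obstacle here: the result is essentially a sanity check that the discretised angle set underlying NNQA, although referred to as a ``field'' in the source literature, is merely a finite list of reals with no algebraic closure. The only point demanding a little care is fixing the convention that ``canonical addition and multiplication'' are the usual operations of $\field{R}$ (so that the candidate identities are exactly $0$ and $1$), together with noting the trivial restriction $N_q \ge 1$ needed for $F$ to be defined.
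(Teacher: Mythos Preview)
Your proposal is correct. Your primary argument (failure of closure under addition, via $\beta_0+\beta_0=-\pi\notin[-\pi/2,\pi/2]$) is different from the paper's proof, which instead shows that the additive identity $0$ is not in $F$: assuming $0\in F$ forces $k=(2^{N_q}-1)/2$, impossible since $2^{N_q}-1$ is odd. That is precisely the first of your two ``alternative routes,'' so you have in fact subsumed the paper's argument. Your closure argument is slightly more elementary in that it avoids any case analysis on parity; the paper's argument has the minor advantage of showing that even modifying the operations to force closure (e.g.\ reducing modulo $\pi$) would not yield a field unless one also shifted the grid to include $0$. Your third observation about the absence of a multiplicative identity does not appear in the paper and is a nice extra.
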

\begin{proof}
We will only show that the additive neutral element is not in the set. Suppose that $0\in F$. So, 
$$\pi\cdot\left(-0.5+\frac{k}{2^{Nq}-1}\right) = 0\Rightarrow -0.5+\frac{k}{2^{Nq}-1} = 0$$
$$\Rightarrow \frac{k}{2^{Nq}-1}=0.5 \Rightarrow k = (2^{Nq}-1)\cdot \left(-\frac{1}{2}\right)$$
$Nq$ is a positive integer, and $2^{Nq}-1$ is an odd positive integer, then 
$(2^{Nq}-1)\cdot \left(-\frac{1}{2}\right)~\notin~\mathbb{Z}$ which contradicts the assumption 
that $k \in \mathbb{Z}$ and so $F$ is not a field since $0 \notin F$.

\end{proof}

From Proposition~\ref{proposition:qannfield} we conclude that the we  cannot directly lift the operators and algorithms from classical neural networks to NNQA. In a weighted neural network inputs and parameters are rational, real or complex numbers and the set of possible weights of NNQA under operations defined in NNQA neuron is not a field.

\section{Quantum neuron}
\label{sec:quantum_neuron}

\subsection{Towards a quantum perceptron over a field}
 In Definition~\ref{def:neuron} and Definition~\ref{def:nn} we have, respectively, an artificial neuron and a classical neural network as in~\cite{haykin:99}. 
Weights and inputs in classical artificial neural networks normally are in the set of real (or complex) numbers.
\begin{definition}
An \emph{artificial neuron} is described by the Equation~\eqref{eq:an}, where 
$x_1,x_2,\cdots,x_k$ are input signals and $w_1,w_2,\cdots,w_k$ are weights of 
the synaptic links, $f(\cdot)$ is a nonlinear activation function, and $y$ is the output signal of the neuron.
\begin{equation}
y = f\left(\sum_{j=0}^m w_{j}x_j\right)
\label{eq:an}
\end{equation}
\label{def:neuron}
\end{definition}

 In both qANN and qMPN artificial neurons, weights and inputs are in different sets (respectively in quantum operators and qubits) while weights and inputs in a classical perceptron are elements of the same field.  The NNQA model defines an artificial neuron where weights and inputs are strings of qubits. The neuron is based on a complex valued network and does not exactly follow Definition~\ref{def:neuron}. The main problem in NNQA is that the inputs and weights do not form a field with sum and multiplication values as we show in Proposition~\ref{proposition:qannfield}. There is no guarantee that  the set of discretized parameters is closed under the  operations between qubits.  

 Other models of neural networks where inputs and parameters are qubits were presented in~\cite{oliveira:08,silva:10,daSilva:12}. These models are a generalization of weightless neural network models, whose definitions are not similar to Definition~\ref{def:neuron}.

\begin{definition} 
 A \emph{neural network} is a directed graph consisting of nodes with 
interconnecting synaptic and activation links, and is characterized by four 
properties~\cite{haykin:99}: 
\begin{enumerate}
 \item Each neuron is represented by a set of linear synaptic links, an 
externally applied bias, and a possibility non-linear activation link. The bias 
is represented by a synaptic link connected to an input fixed at +1.
 \item The synaptic links of a neuron weight their respective input signals.
 \item The weighted sum of the input signals defines the induced local field of 
the neuron in question.
 \item The activation link squashes the induced local field of the neuron to 
produce an output. 
\end{enumerate}
\label{def:nn}
\end{definition}

  The architecture of NNQA can be viewed as a directed graph consisting of nodes with interconnecting synaptic and activation links as stated in Definition~\ref{def:nn}. The NNQA does not follow all properties of a neural network (mainly because it is based in the qubit NN), but it is one of the first quantum neural networks with weights and a well defined architecture. The main characteristics of the NNQA are that inputs and weights are represented by a string of qubits, and network follows a unitary evolution. Based on these characteristics  we will propose the quantum perceptron over a field. 

\subsection{Neuron operations}

 We propose a quantum perceptron with the following properties: it can be trained with a quantum or classical algorithm, we can put all neural networks for a given architecture in superposition, and if the weights are in the computational basis the quantum perceptron acts like the classical perceptron. One of the difficulties to define a quantum perceptron is that the set of n-dimensional qubits, sum and (tensor) product operators do not form a field (as shown in Proposition~\ref{proposition:sumnotfield}). Therefore, the first step in the definition of the quantum perceptron is to define a new appropriate sum and multiplication of qubits.

\begin{proposition}
\label{proposition:sumnotfield}
The set of qubits under sum + of qubits and tensor product $\otimes$ is not a field.
\end{proposition}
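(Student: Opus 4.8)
The plan is to refute the field axioms by the cheapest available means, namely a failure of closure (equivalently, the absence of an additive identity inside the set). A field must in particular be an abelian group under its additive operation, hence closed under that operation and containing a neutral element; I will show the set of qubits satisfies neither of these with respect to $+$, which already suffices.

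First I would fix the set in question. Writing $Q = \{\, \alpha\ket{0} + \beta\ket{1} : \alpha,\beta\in\field{C},\ |\alpha|^2 + |\beta|^2 = 1 \,\} \subset \field{C}^2$ for the set of qubits, observe that $\ket{0}\in Q$ and $\ket{1}\in Q$, but their sum $\ket{0}+\ket{1}$ is the vector $\begin{bmatrix}1\\1\end{bmatrix}$, whose Euclidean norm is $\sqrt{2}\neq 1$; hence $\ket{0}+\ket{1}\notin Q$. Therefore $+$ is not even a binary operation on $Q$, and so $(Q,+,\otimes)$ cannot be a field. The same conclusion follows from the observation that the zero vector of $\field{C}^2$, the only candidate for an additive neutral element, has norm $0\neq 1$ and hence is not a qubit, so $(Q,+)$ has no identity element. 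This mirrors the argument already used in Proposition~\ref{proposition:qannfield}.

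For completeness I would also remark that the multiplicative operation misbehaves too: the tensor product of two one-qubit states lies in $\field{C}^2\otimes\field{C}^2\cong\field{C}^4$ rather than in $\field{C}^2$ — for instance $\ket{0}\otimes\ket{0}=\ket{00}$ is four-dimensional — so $\otimes$ does not map $Q\times Q$ into $Q$ either. The only point that needs care is a clean statement of what ``the set of qubits'' means; once that is pinned down there is no real obstacle, since closure fails for both operations and the verification reduces to a one-line norm computation. The role of this proposition in the paper is precisely to motivate replacing $+$ and $\otimes$ by new, field-compatible operations on qubits in the definition of the quantum perceptron over a field.
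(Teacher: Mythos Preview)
Your proposal is correct and follows essentially the same approach as the paper: the paper argues that the zero vector (the only possible additive identity) has norm $0$ and hence is not a qubit, and that the tensor product of two qubits lands in a larger space, so closure fails for $\otimes$. You make both of these points, and additionally give a concrete closure failure for $+$ via $\ket{0}+\ket{1}$, which is a nice extra but not a different route.
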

\begin{proof}
The null vector has norm 0 and is not a valid qubit. Under + operator the null vector is unique, so there is not a null vector in the set of qubits. Then we cannot use + operator to define a field in the set of qubits. 

Tensor product between two qubits results in a compound system with two qubits. So the space of qubits is not closed under the $\otimes$ operator and we cannot use $\otimes$ operator to define a field in the set of qubits. 
\end{proof}

 We will define unitary operators to perform sum $\oplus$ and product $\odot$ of qubits based in the field operations. Then we will use these new operators to define a quantum neuron. Let  $F(\oplus,\odot)$ be a finite field. We can associate the values $a\in F$ to vectors (or qubits) $\ket{a}$ in a basis of a vector space  $\field{V}$. If $F$ has $n$ elements the vector space will have dimension $n$.

 Product operation  $\odot$ in $F$ can be used to define a new product between vectors in $\field{V}$. Let $\ket{a}$ and $\ket{b}$ be qubits associated with  scalars 
$a$ and $b$, we define $\ket{a}\odot\ket{b}=\ket{a \odot b}$   such that 
$\ket{a \odot b}$  is related with the scalar  $a \odot b$. 
 This product send basis elements to basis elements, so we can define a linear 
operator $\qop{P}:\field{V}^3 \rightarrow \field{V}^3$ as in Equation~\eqref{eq:odot}.  
 
\begin{equation}
\qop{P}\ket{a}\ket{b}\ket{c} = \ket{a}\ket{b}\ket{c \oplus \left(a \odot 
b\right)}
\label{eq:odot}
\end{equation}
We show in Proposition~\ref{proposition:unitariedade} that the $\qop{P}$ operator is 
unitary, therefore $\qop{P}$ is a valid quantum operator.

\begin{proposition}
\label{proposition:unitariedade}
$\qop{P}$ is a unitary operator.
\end{proposition}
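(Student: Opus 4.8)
The plan is to show that $\qop{P}$ merely permutes the computational basis of $\field{V}^3$, and then invoke the elementary fact that any operator permuting an orthonormal basis is unitary. First I would recall that, by construction, $\qop{P}$ is the linear extension of the map defined on basis vectors by $\ket{a}\ket{b}\ket{c} \mapsto \ket{a}\ket{b}\ket{c \oplus (a\odot b)}$, where $a,b,c$ range over the finite field $F$ and $\{\ket{a} : a \in F\}$ is the orthonormal basis of $\field{V}$ obtained by associating each scalar with a basis vector. Since $\qop{P}$ is determined on the whole space by linearity, it suffices to analyze the associated set map $\sigma : F^3 \to F^3$ given by $\sigma(a,b,c) = (a,b,\, c\oplus(a\odot b))$.

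Next I would prove that $\sigma$ is a bijection by exhibiting an explicit inverse. Because $(F,\oplus)$ is an abelian group, the element $a\odot b$ has an additive inverse, which I denote $\ominus(a\odot b)$; define $\tau(a,b,c) = (a,\,b,\, c \oplus (\ominus(a\odot b)))$. A direct check using associativity and commutativity of $\oplus$ shows $\tau\circ\sigma = \sigma\circ\tau = \mathrm{id}_{F^3}$, so $\sigma$ is a permutation of $F^3$. Consequently $\qop{P}$ maps the orthonormal basis $\{\ket{a}\ket{b}\ket{c}\}$ bijectively onto itself, i.e.\ the matrix of $\qop{P}$ in this basis is a permutation matrix.

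Finally I would conclude unitarity. Writing $\{\ket{e_i}\}$ for the orthonormal basis $\{\ket{a}\ket{b}\ket{c}\}$ and $\sigma$ for the induced permutation of indices, we have $\qop{P}\ket{e_i} = \ket{e_{\sigma(i)}}$, hence $\bra{e_i}\qop{P}^\dag\qop{P}\ket{e_j} = \langle e_{\sigma(i)} | e_{\sigma(j)}\rangle = \delta_{ij}$, so $\qop{P}^\dag\qop{P} = \qop{I}$; since $\field{V}^3$ is finite-dimensional, a left inverse is a two-sided inverse and $\qop{P}\qop{P}^\dag = \qop{I}$ as well. Therefore $\qop{P}$ is unitary.

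I do not expect a genuine obstacle in this argument. The only point that deserves care is that the construction of $\tau$ really uses the field (in fact just the additive-group) axiom guaranteeing additive inverses in $F$; this is precisely where the standing assumption that $F$ is a field matters, in contrast with the non-field set of Proposition~\ref{proposition:qannfield}. The rest is bookkeeping: verifying $\tau\circ\sigma=\sigma\circ\tau=\mathrm{id}$ on elements, and noting that every claim about $\qop{P}$ reduces to basis vectors by linearity.
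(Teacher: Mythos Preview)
Your proposal is correct and follows essentially the same approach as the paper: both arguments show that $\qop{P}$ permutes the orthonormal computational basis of $\field{V}^3$ and conclude unitarity from that. The only cosmetic difference is that the paper establishes the bijection by checking injectivity directly (using cancellation in $F$) and then invoking the criterion that an operator sending an orthonormal basis to an orthonormal basis is unitary, whereas you exhibit the explicit inverse $\tau$ and verify $\qop{P}^\dag\qop{P}=\qop{I}$ by hand; your remark pinpointing the role of additive inverses is a nice addition not made explicit in the paper.
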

\begin{proof}

Let $B=\{\ket{a_1}, \ket{a_2}, \cdots, \ket{a_n}\}$ be a computational basis of a vector space $\vs{V}$, where we associate $\ket{a_i}$ with the element $a_i$ of the 
finite field $F(\oplus,\odot)$.  The set $B^3 = \{\ket{a_i}\ket{a_j}\ket{a_k}\}$ 
with $1 \leq i, j, k \leq n$ is a computational basis of vector space $\vs{V}^3$.
We will show that $\qop{P}$ sends elements of basis $B^3$ in distinct elements 
of basis $B^3$.

$\qop{P}$ sends vectors in base $B^3$ to vectors in base $B^3$. Let $a$, $b$, 
$c$ in $B$. 
$\qop{P} \ket{a} \ket{b} \ket{c} = \ket{a} \ket{b} \ket{c \oplus (a \odot b)} 
$. Operators $\oplus$ and $\odot$ are well defined, then $\ket{c \oplus (a \odot 
b)} \in B$ and $\ket{a} \ket{b} \ket{c \oplus (a \odot b)} \in B^3$.

$\qop{P}$ is injective. Let $a, b, c, a_1, b_1, c_1 \in F$ such that $\qop{P} 
\ket{a}\ket{b} \ket{c} = \qop{P} \ket{a_1}\ket{b_1} \ket{c_1}$ . By the 
definition of operator $\qop{P}$ $a = a_1$ and $b = b_1$ . When we apply the 
operator we get $\ket{a}\ket{b}\ket{c \oplus (a \odot b)} =
\ket{a}\ket{b}\ket{c_1 \oplus (a \odot b)}$. Then $c_1 \oplus (a \odot b) = c 
\oplus (a \odot b)$ . By the field properties we get $c_1 = c$.

 An operator over a vector space is unitary if and only if the operator sends some orthonormal basis to some orthonormal basis~\cite{hoffman:71}. As $\qop{P}$ 
is injective and sends vectors in base $B^3$  to vectors in base $B^3$, we 
conclude that $\qop{P}$ is an unitary operator.
\end{proof}

 With a similar construction, we can use the sum operation $\oplus$ in $F$ to define a unitary sum operator $\qop{S}:\vs{V}^3 \rightarrow \vs{V}^3$. Let $\ket{a}$ and 
$\ket{b}$ be qubits associated with  scalars $a$ and $b$, we define $\ket{a} \oplus \ket{b}=\ket{a \oplus b}$   such that 
$\ket{a \oplus b}$  is related with the scalar  $a \oplus b$. We define the 
unitary quantum operator $S$ in Equation~\eqref{eq:oplus}.

\begin{equation}
\qop{S}\ket{a}\ket{b}\ket{c} = \ket{a}\ket{b}\ket{c \oplus (a \oplus b)}
\label{eq:oplus}
\end{equation}

 We denote product and sum of vectors over $F$ by $\ket{a}\odot 
 \ket{b}=\ket{a \odot b}$  and $\ket{a} \oplus \ket{b} = 
\ket{a \oplus b}$ to represent, respectively, 
$\qop{P}\ket{a}\ket{b}\ket{0}=\ket{a}\ket{b}\ket{a\odot b}$ and 
$\qop{S}\ket{a}\ket{b}\ket{0}=\ket{a}\ket{b}\ket{a \oplus 
b}$. 

\subsection{Quantum perceptron over a field}

 Using $\qop{S}$ and $\qop{P}$ operators we can define a quantum perceptron 
analogously as the classical one. Inputs $\ket{x_i}$, weights $\ket{w_i}$ and output $\ket{y}$ will be unit vectors (or qubits) in 
$\vs{V}$ representing scalars in a field $F$. Equation~\eqref{eq:qpf} describes the 
proposed quantum perceptron.
\begin{equation}
\ket{y} = \bigoplus_{i=1}^{n}\ket{x_i} \odot \ket{w_i}
\label{eq:qpf}
\end{equation}
 If the field $F$ is the set of rational numbers,  then Definition~\ref{def:neuron} without activation function $f$ correspond to  Definition~\ref{eq:qpf} when inputs, and weights are in the computational basis.
 
 The definition in Equation~\eqref{eq:qpf} hides several ancillary qubits. The 
complete configuration of a quantum perceptron is given by the state $\ket{\psi}$ described in Equation~\eqref{eq:qpfanc}, 
\begin{equation}
\begin{split}
\ket{\psi} = \left|x_1,\cdots, x_n,w_1,\cdots, w_n,\right.\\ \left.p_1,\cdots, 
p_n,s_2,\cdots, s_{n-1},y \right\rangle
\end{split}
\label{eq:qpfanc}
\end{equation}
where 
$\ket{x}= \ket{x_1,\cdots, x_n}$ is the input quantum register, 
$\ket{w}=\ket{w_1,\cdots, w_n}$ is the weight quantum register, 
$\ket{p}=\ket{p_1,\cdots, p_n}$ is an ancillary quantum register used to store 
the products $\ket{x_i \odot w_i}$, $\ket{s}=\ket{s_2,\cdots, s_{n-1}}$ is an 
ancillary quantum register used to store sums, and 
$\ket{y}$ is the output quantum register. From 
Equation~\eqref{eq:qpfanc} one can see that to put several or all possible neural networks  in superposition one can simply put the weight quantum register in superposition. Then a single quantum neuron can be in a superposition of several neurons simultaneously.

  A quantum perceptron over a finite $d$-dimensional field and with $n$ inputs 
needs $(4n-1) \cdot d$ quantum bits to perform its computation. There are $n$ quantum registers to store inputs $x_i$, $n$ quantum registers to store weights $w_i$, $n$ quantum 
registers $p_i$ to store the products $w_i \odot x_i$, $n-2$ quantum registers 
$\ket{s_i}$ to store sums $\sum_{k=1}^i p_i$ and one quantum register to store the output $\ket{y} = \sum_{k=1}^n p_i$.

 We show now a neuron with 2 inputs to illustrate the workings of the quantum neuron. Suppose  $F = \field{Z}_2   = \{0,1\}$. As the field has only two elements we need only two orthonormal quantum states to represent the scalars. We choose the canonical ones $0 \leftrightarrow \ket{0}$ and $1 \leftrightarrow 
\ket{1}$.

 Now we define sum $\oplus$ and multiplication $\odot$ operators based on the sum and multiplication in $\field{Z}_2$. The operators $\qop{S}$ and $\qop{P}$ 
are shown, respectively, in Equations~\eqref{eq:sum} and~\eqref{eq:prod}. 

\begin{equation}
S=
\begin{bmatrix}
1 & 0 & 0 & 0 & 0 & 0 & 0 & 0\\
0 & 1 & 0 & 0 & 0 & 0 & 0 & 0\\
0 & 0 & 0 & 1 & 0 & 0 & 0 & 0\\
0 & 0 & 1 & 0 & 0 & 0 & 0 & 0\\
0 & 0 & 0 & 0 & 0 & 1 & 0 & 0\\
0 & 0 & 0 & 0 & 1 & 0 & 0 & 0\\
0 & 0 & 0 & 0 & 0 & 0 & 1 & 0\\
0 & 0 & 0 & 0 & 0 & 0 & 0 & 1\\
\end{bmatrix}
\label{eq:sum}
\end{equation}

\begin{equation}
P=
\begin{bmatrix}
1 & 0 & 0 & 0 & 0 & 0 & 0 & 0\\
0 & 1 & 0 & 0 & 0 & 0 & 0 & 0\\
0 & 0 & 1 & 0 & 0 & 0 & 0 & 0\\
0 & 0 & 0 & 1 & 0 & 0 & 0 & 0\\
0 & 0 & 0 & 0 & 1 & 0 & 0 & 0\\
0 & 0 & 0 & 0 & 0 & 1 & 0 & 0\\
0 & 0 & 0 & 0 & 0 & 0 & 0 & 1\\
0 & 0 & 0 & 0 & 0 & 0 & 1 & 0\\
\end{bmatrix}
\label{eq:prod}
\end{equation}

Using $\qop{S}$ and $\qop{P}$ operators we describe the quantum neuron  
$\qop{N}$ in Equation~\eqref{eq:neuronexample}. The subscripts in operators indicate the qubits upon which they will be applied. 
\begin{equation}
\label{eq:neuronexample}
 \qop{N} = \qop{S}_{p_1p_2,y}\qop{P}_{x_2w_2,p_2}\qop{P}_{x_1w_1,p_1}
\end{equation}

 For our execution example, we define $\ket{ x_1 x_2 }=\ket{01}$,  $\ket{w_1 
}=\frac{1}{\sqrt{2}}(\ket{0}+\ket{1})$  and 
$\ket{w_2}=\frac{1}{\sqrt{2}}(\ket{0}+\ket{1})$. The initial configuration of 
the quantum perceptron is  $\ket{\psi_0}$ described in Equation~\eqref{eq:initialconf}. The initial configuration has all possible weights in the set $\{0,1\}^2$ and applying the QP  will result the output for each weight simultaneously. 

\begin{equation}
\begin{split}
\ket{x_1x_2}\ket{w_1w_2}\ket{p_1p_2}\ket{s}\ket{y}= \\
\frac{1}{2}\ket{01}\left( \ket{00} + \ket{01}+\ket{11}+\ket{11}   
\right)\ket{00}\ket{0}\ket{0} =  \\\frac{1}{2}\left( 
\ket{01}\ket{00}\ket{00}\ket{0}\ket{0} + 
\ket{01}\ket{01}\ket{00}\ket{0}\ket{0} + \right.\\
\ket{01}\ket{10}\ket{00}\ket{0}\ket{0} +
\left. \ket{01}\ket{11}\ket{00}\ket{0}\ket{0}
\right)
\end{split}
\label{eq:initialconf}
\end{equation}

  The action of the quantum perceptron $\qop{N}$ over $\ket{\psi_0}$ is shown in 
Equation~\eqref{eq:exrun}, where $\qop{N}$ calculates the output for all possible weights (or all neurons in superposition) in only one single run. 
\begin{equation}
\begin{split}
\qop{N}\ket{\psi_0} =  \frac{1}{2}\qop{N}( 
\ket{01}\ket{00}\ket{00}\ket{0}\ket{0} + \\
\ket{01}\ket{01}\ket{00}\ket{0}\ket{0} +  
\ket{01}\ket{10}\ket{00}\ket{0}\ket{0} + \\
\ket{01}\ket{11}\ket{00}\ket{0}\ket{0})= \\
\frac{1}{2}( 
\qop{N}\ket{01}\ket{00}\ket{00}\ket{0}\ket{0} + 
\qop{N}\ket{01}\ket{01}\ket{00}\ket{0}\ket{0} +  \\
\qop{N}\ket{01}\ket{10}\ket{00}\ket{0}\ket{0} + 
\qop{N}\ket{01}\ket{11}\ket{00}\ket{0}\ket{0})= \\
\frac{1}{2}( 
\qop{S}_{p_1p_2y}\ket{01}\ket{00}\ket{0\odot 0,1 \odot 0}\ket{0}\ket{0} + \\
\qop{S}_{p_1p_2y}\ket{01}\ket{01}\ket{0\odot 0,1 \odot 1}\ket{0}\ket{0} +  \\
\qop{S}_{p_1p_2y}\ket{01}\ket{10}\ket{0\odot 1,1 \odot 0}\ket{0}\ket{0} + \\
\qop{S}_{p_1p_2y}\ket{01}\ket{11}\ket{0\odot 1,1 \odot 1}\ket{0}\ket{0})= \\
\ket{01}\ket{00}\ket{0,0}\ket{0\oplus 0}\ket{0} + \\
\ket{01}\ket{01}\ket{0,1}\ket{0\oplus 1}\ket{0} +  \\
\ket{01}\ket{10}\ket{0,0}\ket{0\oplus 0}\ket{0} + \\
\ket{01}\ket{11}\ket{0,1}\ket{0\oplus 1}\ket{0})= \\
\ket{01}\ket{00}\ket{0,0}\ket{0}\ket{0} + 
\ket{01}\ket{01}\ket{0,1}\ket{1}\ket{1} +  \\
\ket{01}\ket{10}\ket{0,0}\ket{0}\ket{0} + 
\ket{01}\ket{11}\ket{0,1}\ket{1}\ket{1})
\end{split}
\label{eq:exrun}
\end{equation}

\subsubsection{Neural network architecture}
We start this Subsection with an example of a classical multilayer perceptron and show an equivalent representation in a quantum computer. Let $N$ be the neural network described in Fig.~\ref{fig:architecture}. 
\begin{figure}
\begin{center}
\begin{tikzpicture}[->]
 \tikzstyle{input} = [fill=black, draw]
 \tikzstyle{neuron}=[circle,draw,minimum size=20,inner sep=0]
 
 \foreach \name / \y in {0,3,6}
	{
	\node at (-5.4,-\y){\pgfmathparse{int(\y/3+1)}$x_{\pgfmathresult}$};
	\node[input] (In-\name) at (-5,-\y) {};
	}

 \foreach \name / \y in {0,3,6}
	{
	\node[neuron] (I-\name) at (-2,-\y) {};
	\node at (-2,-\y) [] {$\sum$};
	}
 \foreach \name / \y in {2,4}
	{
	\node[neuron] (H-\name) at (0,-\y) {};
	\node at (0,-\y) [] {$\sum$};
	}

            
 \path(In-0) edge node[above, sloped,pos =0.2]{$w_{11}$} (I-0);
 \path(In-3) edge node[above, sloped,pos =0.15]{$w_{12}$} (I-0);
 \path(In-6) edge node[above, sloped,pos =0.10]{$w_{13}$} (I-0);
 
 \path(In-0) edge node[above, sloped,pos =0.2]{$w_{21}$} (I-3);
 \path(In-3) edge node[above, sloped,pos =0.2]{$w_{22}$} (I-3);
 \path(In-6) edge node[below, sloped,pos =0.2]{$w_{23}$} (I-3);
 
 \path(In-0) edge node[below, sloped,pos =0.15]{$w_{31}$} (I-6);
 \path(In-3) edge node[below, sloped,pos =0.1]{$w_{32}$} (I-6);
 \path(In-6) edge node[below, sloped,pos =0.2]{$w_{33}$} (I-6);

 
 \path(I-0) edge node[above, sloped,pos =0.2]{$w_{14}$} (H-2);
 \path(I-3) edge node[above, sloped,pos =0.2]{$w_{24}$} (H-2);
 \path(I-6) edge node[above, sloped,pos =0.2]{$w_{34}$} (H-2);
 
 \path(I-0) edge node[below, sloped,pos =0.2]{$w_{15}$} (H-4);
 \path(I-3) edge node[below, sloped,pos =0.2]{$w_{25}$} (H-4);
 \path(I-6) edge node[below, sloped,pos =0.2]{$w_{35}$} (H-4);
            
\foreach \name / \y in {2,4}
	{
	
	\path (H-\name) edge node[above] 
{\pgfmathparse{int(\y/2)}$y_\pgfmathresult$}(0.75,-\y);
	}
\end{tikzpicture}	
\end{center}
\caption{Architecture of the multilayer quantum perceptron over a field.}
\label{fig:architecture}
\end{figure}
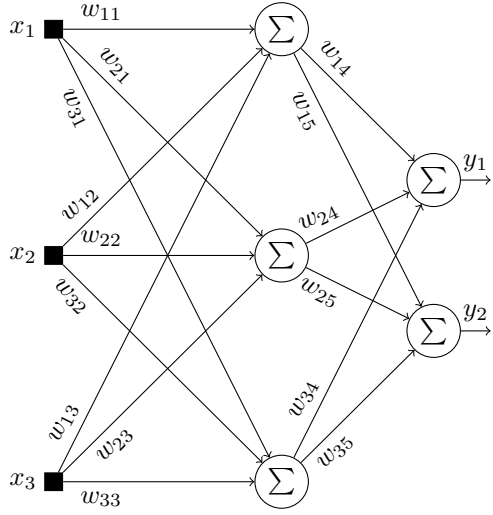
The output  of this network can be calculated as  $y=L_2 \cdot L_1 \cdot x$ using the three matrices $L_1$, $L_2$ and $x$  described in Equation~\eqref{eq:architecturematrix}.

\begin{equation}
\label{eq:architecturematrix}
\begin{split}
L_1 =
 \begin{bmatrix}
  w_{11} & w_{12} & w_{13} \\
  w_{21} & w_{22} & w_{23} \\
  w_{31} & w_{32} & w_{33}
 \end{bmatrix}, \\
L_2 = 
\begin{bmatrix}
w_{14} & w_{24} & w_{34} \\
w_{15} & w_{25} & w_{35} \\
\end{bmatrix},\ \
x=\begin{bmatrix}
   x_1 \\ x_2 \\ x_3
  \end{bmatrix}
  \end{split}
\end{equation}

Weights, inputs and outputs in a classical neural network are real numbers. Here we suppose finite memory and we use elements of a finite field $(F,\oplus, 
\odot)$  to represent the neural network parameters. We can define a quantum operator $M_{3\times 3, 3 \times 1}$ that multiplies a $3 \times 3$ matrix with 
a $3 \times 1 $ matrix. If $L_1 \cdot x = \begin{bmatrix}o_1 &  o_2 & 
o_3\end{bmatrix}^t$ we define the action of $M_{3\times 3, 3 \times 1}$ in 
Equation~\eqref{eq:qmatrix}, where $w_i = w_{i1}, w_{i2}, w_{i3}$.
\begin{equation}
\label{eq:qmatrix}
\begin{split}
 M_{3\times 3, 3 \times 1}\ket{w_1,w_2,w_3, x_1,x_2,x_3,0,0,0} = \\
 \ket{w_1,w_2,w_3, x_1,x_2,x_3,o_1,o_2,o_3}
 \end{split}
\end{equation}

Each layer of the quantum perceptron over a field can be represented by an arbitrary matrix as in Equation~\eqref{eq:qmlarch}, 
\begin{equation}
\label{eq:qmlarch}
 M_{2\times 3, 3 \times 1} M_{3\times 3, 3 \times 
1}\ket{L_2}\ket{L_1}\ket{x}\ket{000}\ket{00}
\end{equation}
where $M_{3\times 3, 3  \times 1}$ acts on $\ket{L_1}$, $\ket{x}$ with output in register initialized with $\ket{000}$; and $M_{2\times 3, 3 \times 1}$ acts on $\ket{L_2}$, the output of the first operation, and the last quantum register.
This matrix approach can be used to represent any feed-forward multilayer 
quantum perceptron over a field  with any number of layers.

We suppose here that the training set and desired output are composed of classical data and that the data run forward. The supposition of classical desired output will allow us to superpose neural network configurations with its performance, as we will see in the next section.

\subsubsection{Learning algorithm}
 In this Subsection, we present a learning algorithm that effectively uses quantum superposition to train a quantum perceptron over a field. Algorithms based on superposition have been proposed previously in~\cite{panella:09,panella2009neurofuzzy,zhou:12}. In these papers, a non-linear quantum operator proposed in~\cite{PhysRevLett.81.3992}, is used in the learning process.  In~\cite{panella:09} performances of neural networks in superposition are entangled with its representation. A non-linear algorithm is used to recover a neural network configuration with performance greater than a given threshold $\theta$. A non-linear 
algorithm is used to recover the best neural network configuration. In~\cite{panella2009neurofuzzy} the nonlinear quantum operator is used in the learning process of a neurofuzzy network. In~\cite{zhou:12} a quantum associative neural network is proposed where a non-linear quantum circuit is used to increase the pattern recalling speed.

  We propose a variant of the learning algorithm proposed in~\cite{panella:09}. The proposed quantum algorithm is named Superposition based Architecture Learning (SAL) algorithm.  In the SAL algorithm the superposition of neural networks will store its performance entangled with its representation, as in~\cite{panella:09}. Later we will use a non-linear quantum operator to recover the architecture and weights of the neural network configuration with best performance. 

  In the classical computing paradigm, the idea of presenting an input pattern to all possible neural networks architectures is impracticable. To perform this idea classically one will need to create several copies of the neural network (one for each configuration and architecture) to receive all the inputs and compute in parallel the corresponding outputs. After calculating the output of each pattern for each neural network configuration, one can search the neural configuration with best performance. Yet classically the idea of SAL learning is presented in
Fig.~\ref{fig:framework}. For some neural network architectures, all the patterns in the training set  $P = \{p_1,p_2,\cdots,p_k\}$ are presented to each of the neural network configurations. Outputs are calculated and then one can search the best neural network parameters.

\begin{figure}
\begin{center}
\resizebox{0.5\columnwidth}{!}{
\begin{tikzpicture}

\draw  (10,10) rectangle (12,11);
\node at (11,10.5) {QPF1.1};
\node at (11,9.5) {.};
\node at (11,9) {.};
\node at (11,8.5) {.};
\draw  (10,8) rectangle (12,7);
\node at (11,7.5) {QPF1.$n_1$};
\node at (11,6.5) {Architecture 1};
\draw [densely dashed] (9.5,11.5) rectangle (12.5,6);

\node at (11,5.5) {.};
\node at (11,5.0) {.};
\node at (11,4.5) {.};

\draw  (10,2.5) rectangle (12,3.5);
\node at (11,3) {QPF1.1};
\node at (11,2) {.};
\node at (11,1.5) {.};
\node at (11,1) {.};
\draw  (10,0.5) rectangle (12,-0.5);
\node at (11,0) {QPF1.$n_2$};
\node at (11,-1) {Architecture $m$};
\draw [densely dashed] (9.5,4) rectangle (12.5,-1.5);
\draw  (5,5.5) rectangle (7,4.5);
\node at (6,5) {input};
\node (v1) at (7,5) {};
\node (v2) at (8,5) {};
\node (v5) at (8,7.5) {};
\node (v3) at (8,10.5) {};
\node (v4) at (10,10.5) {};
\node (v6) at (10,7.5) {};
\node (v9) at (8,3) {};
\node (v7) at (8,0) {};
\node (v8) at (10,0) {};
\node (v10) at (10,3) {};
\draw  (v1) edge (v2);
\draw  (v2) edge (v3);
\draw  (v3) edge[->] (v4);
\draw  (v5) edge [->](v6);
\draw  (v2) edge (v7);
\draw  (v7) edge[->] (v8);
\draw  (v9) edge[->] (v10);
\node (v11) at (12,10.5) {};
\node (v12) at (13.5,10.5) {};
\node (v13) at (12,7.5) {};
\node (v14) at (13.5,7.5) {};
\node (v15) at (12,3) {};
\node (v16) at (13.5,3) {};
\node (v17) at (12,0) {};
\node (v18) at (13.5,0) {};
\draw  (v11) edge[->] (v12);
\draw  (v13) edge[->] (v14);
\draw  (v15) edge[->] (v16);
\draw  (v17) edge[->] (v18);
\draw  (5,-1) rectangle (7,-2.5);
\node at (6,-1.5) {desired};
\node at (6,-2) {output};
\draw  (13.5,11) rectangle (17,10);
\node at (15.5,10.5) {Performance 1.1};
\draw  (13.5,8) rectangle (17,7);
\node at (15.5,7.5) {Performance $m$.$n_1$};
\draw  (13.5,3.5) rectangle (17,2.5);
\node at (15.5,3) {Performance 1.$n_1$};
\draw  (13.5,0.5) rectangle (17,-0.5);
\node at (15.5,0) {Performance $m$.$n_2$};
\node (v19) at (7,-2) {};
\node (v20) at (14,-2) {};
\node (v21) at (14,11.5) {};
\draw [dashed] (v19) edge (v20);
\draw [dashed] (v20) edge[->] (v21);
\end{tikzpicture}
}
\end{center}
\caption{Superposition based framework}
\label{fig:framework}
\end{figure}
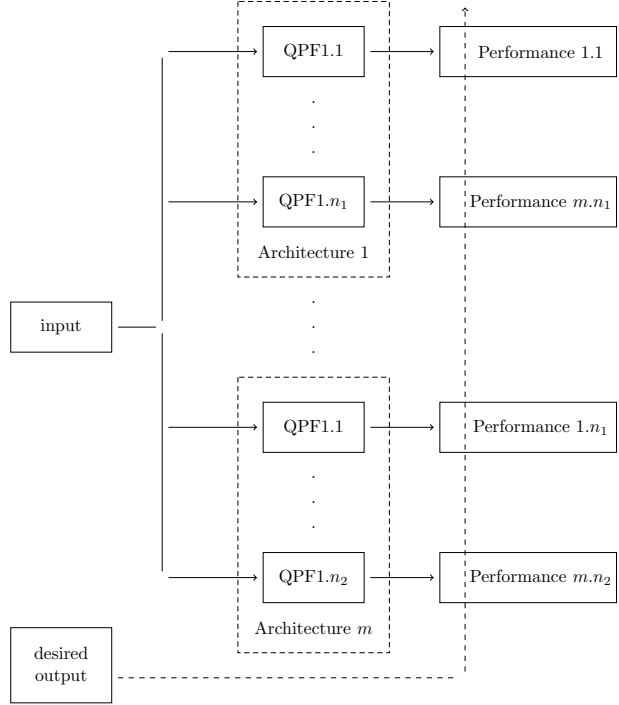

 In a quantum computer, the framework described in Fig.~\ref{fig:framework} 
can be implemented  without the hardware limitations showed in the classical implementation. Let $\qop{N}_0$, $\qop{N}_1$, $\cdots$, $\qop{N}_{m-1}$ be $m$ quantum operators representing neural networks with different architectures. A quantum circuit with quantum registers architecture selector $\ket{a}$ with $\lceil\log_2(m)\rceil$ qubits, input $\ket{x}$, weight $\ket{w}$ and output $\ket{o}$ can be created, where 
operator $\qop{N}_i$ is applied to $\ket{x, w, o}$ if and only if $\ket{a} = \ket{i}$. In Fig.~\ref{fig:quantumarch} this approach is illustrated with $m=4$.

\begin{figure}
 \[  \Qcircuit @C=0.22em @R=1.0em {
\lstick{\ket{a_1}} & \qw & \ctrlo{1} &  \qw & \qw &  \qw &  \qw &  \qw & 
\ctrlo{1} &  \qw & \qw & \qw &  \qw &  \qw & \ctrl{1} &  \qw &  \qw & \qw &  
\qw 
&  \qw & \ctrl{1} &  \qw \\
\lstick{\ket{a_2}} & \qw & \ctrlo{1} &  \qw & \qw &  \qw &  \qw &  \qw & 
\ctrl{1} &  \qw & \qw & \qw &  \qw &  \qw & \ctrlo{1} &  \qw & \qw & \qw &  \qw 
&  \qw & \ctrl{1} &  \qw \\
\lstick{\ket{x}}&\qw  & \multigate{1}{\qop{N}_0} & \qw & \qw & \qw & \qw & \qw 
& \multigate{1}{\qop{N}_1} & \qw & \qw &\qw & \qw & \qw & 
\multigate{1}{\qop{N}_2} & \qw 
&\qw & \qw & \qw & \qw & \multigate{1}{\qop{N}_3} &   \qw   \\
&\lstick{\ket{w}}   & \ghost{\qop{N}_0} &  \qw & \qw & \qw& \qw & \qw & 
\ghost{\qop{N}_1} &  \qw & \qw& \qw& \qw& \qw & \ghost{\qop{N}_2} &  \qw &\qw 
&\qw &\qw & 
\qw & \ghost{\qop{N}_3} & \qw \\
 }\]
 
 \caption{Circuit to create a superposition with four neural network 
architectures.}
\label{fig:quantumarch}
\end{figure}
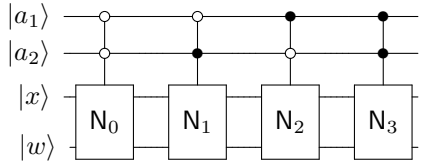

If the qubits in quantum registers $\ket{a}$ and $\ket{w}$ are initialized with 
the quantum state $\qop{H}\ket{0}$, the circuit will be in a superposition 
representing all possible weights configuration for each architecture. 
Initializing the quantum register $\ket{x}$ with a pattern $p$, it is possible 
to present the pattern $p$ to all neural network configurations in the 
superposition simultaneously.

Algorithm~\ref{alg:lssa} is the SAL algorithm. SAL is a quantum-learning algorithm for any quantum neural network model in which input $\ket{p}$, output $\ket{o}$, weights $\ket{w}$, architecture selectors $\ket{a}$ and desired output $\ket{d}$ are represented in separated quantum registers. The main idea of SAL algorithm is to create a superposition of all possible neural network configurations in a finite set of architectures and apply a non-linear quantum operator to recover the architecture and weights of a neural network configuration with a desired performance.

The Algorithm~\ref{alg:lssa} initialization is performed in Steps~\ref{line:1} to~\ref{line:4}. Step~\ref{line:1} defines $m$ quantum operators representing multi-layers QPF with different architectures. Steps~\ref{line:qcircuit} and~\ref{line:2} initialize all the weights and architecture selectors with the quantum state $\qop{H}\ket{0}$. After this step we have all possible neural network configurations for the given architectures in superposition. In Steps~\ref{line:3} and~\ref{line:4} quantum registers performance and objective are initialized respectively, with the quantum states $\ket{0}_n$ and $\ket{0}$.

  The for loop starting in Step~\ref{line:5} is repeated for each pattern $p$ of the data set. Step~\ref{line:6} initializes quantum registers $p$, $o$ and $d$ respectively with a pattern $\ket{p}$, state $\ket{0}$ and its desired output $\ket{d}$. Step~\ref{line:7} presents the pattern  to the neural networks, and the outputs are calculated. In this moment, the pattern is present to all neural networks configurations, because the weights and architecture selectors quantum registers are in a superposition of all possible weights and architectures. In Steps~\ref{line:8} to~\ref{line:10}, it is verified for each configuration in the superposition if the desired output $\ket{d}$ is equal to the calculated output $\ket{o}$. If they match, is added the value 1 for the performance quantum register. Step~\ref{line:11} is performed to allow the initialization of the next for loop.

\begin{equation}
\begin{split}
\ket{a}\ket{w}\ket{\mathrm{\emph{performance}}}\ket{objective} =\\ 
\sum_{\substack{w \in W, \\ a \in A} 
}\ket{a}\ket{w}\ket{\mathrm{\emph{performance}}(w)}\ket{0} 
\end{split}
\label{eq:afterfor}
\end{equation}
After the execution of the for loop, the state of quantum registers 
weights $w$, architecture selectors $a$,
$\mathrm{\emph{performance}}$ and $objective$ can be described as in 
Equation~\eqref{eq:afterfor}, where $A$ is the set of architectures and $W$ is the set of all possible weights.

 \begin{algorithm}[H]
 \begin{algorithmic}[1]
  \STATE Let $\qop{N}_0$, $\qop{N}_1$, $\cdots$, $\qop{N}_m$ be quantum 
operators representing multi-layers QPF with different architectures. 
\label{line:1}
 \STATE Create a quantum circuit where the i-th network acts if and only if 
$\ket{a}=\ket{i}$.\label{line:qcircuit}\\
\STATE	Initialize all weights quantum registers with the quantum state 
$\qop{H}\ket{0}$. 
\label{line:2}\\
\STATE Initialize all architecture quantum registers with quantum 
state $\qop{H}\ket{0}$. \\
\STATE	Initialize a quantum register $\ket{\mathrm{\emph{performance}}}$ with 
the state 
$\ket{0}_n$. \label{line:3}\\
\STATE	Initialize a quantum register $\ket{objective}$ with the state 
$\ket{0}$.\label{line:4}\\
\FOR{ \textbf{each} pattern $p$ and desired output $d$ in the training set }
\label{line:5}
		\STATE Initialize the register $p$, $o$ , $d$ with the quantum 	
state $\ket{p,0,d}$. \label{line:6}\\ 
		\STATE Calculate $\qop{N}\ket{p}$ to calculate network output 
in register $\ket{o}$. \label{line:7}\\
		\IF{ $\ket{o}=\ket{d}$ \label{line:8}} 
			\STATE Add 1 to the register 
$\ket{\mathrm{\emph{performance}}}$\label{line:9}
		\ENDIF
		\label{line:10}
		\STATE Calculate $\qop{N}^{-1}\ket{p}$ to restore 
$\ket{o}$.\label{line:11}

	\ENDFOR
	\STATE Perform a non-linear quantum search  to 
recover a neural network configuration and architecture with desired 
performance. 
\label{line:14}
\end{algorithmic}
\caption{SAL}
\label{alg:lssa}
\end{algorithm}
 
Steps~\ref{line:1} to~14 of Algorithm~\ref{alg:lssa} can be performed using only linear quantum operators. In Step~\ref{line:14} a non-linear quantum operator \qop{NQ} proposed in~\cite{PhysRevLett.81.3992} will be used. Action of \qop{NQ} is described in Equation~\eqref{eq:nonlin} if at least one $\ket{c_i}$ is equal to $\ket{1}$ otherwise its action is described in Equation~\eqref{eq:nonlin2}.

\begin{equation}
\qop{NQ} \left(\sum_i \ket{\psi_i}\ket{c_i} \right)= \left(\sum_i \ket{\psi_i}\right)\ket{1}
\label{eq:nonlin}
\end{equation}
\begin{equation}
\qop{NQ} \left(\sum_i \ket{\psi_i}\ket{c_i}\right) = \left(\sum_i \ket{\psi_i}\right)\ket{0}
\label{eq:nonlin2}
\end{equation}

The non-linear search used in Step~\ref{line:14} is described in Algorithm~\ref{alg:nonlinearsearch}. The for loop in Step 1 of Algorithm~\ref{alg:nonlinearsearch} indicates that the actions need to be repeated for each quantum bit in the architecture and weights quantum registers. Steps 3 to 5 set the objective quantum register $\ket{o}$ to $\ket{1}$ if the performance quantum register $p$ is greater than a given threshold $\theta$. After this operation the state of quantum registers $a$, $w$ and $o$ can be described as in 
Equation~\eqref{eq:objective}.
\begin{equation}
\sum_{\mathclap{\substack{w\in \left(P(a,w) < \theta\right), \\ \ket{b} \neq 
\ket{i} }} }\ket{a}\ket{w}\ket{0} +  \sum_{\mathclap{\substack{w\in 
\left(P(a,w) \geq \theta\right),\\ \ket{b} = \ket{i} }}}\ket{a}\ket{w}\ket{1} 
\label{eq:objective}
\end{equation}
Now that quantum register objective is set to 1 in the desired configurations, it is possible to perform a quantum search to increase the probability amplitude of the best configurations. 

\begin{algorithm}[H]
 \begin{algorithmic}[1]
  \FOR{\textbf{each} quantum bit $\ket{b}$ in quantum registers 
$\ket{a}\ket{w}$}
\FOR{$i=0$ \TO $1$}
   \IF{$\ket{b} = \ket{i}$ and $\ket{p} > \theta$}
  \STATE Set $\ket{o}$ to $\ket{1}$
  \ENDIF
  \STATE Apply $\qop{NQ}$ to $\ket{o}$

  \IF{$\ket{o} = \ket{1}$}
  \STATE Apply $\qop{X}^i \cdot \qop{NQ}$ to qubit 
$\ket{b}$
  \STATE Apply $\qop{X}$ to $\ket{o}$
  \ENDIF
  \ENDFOR
  \ENDFOR
 \end{algorithmic}
\caption{Non-linear quantum search}
\label{alg:nonlinearsearch}
\end{algorithm}

Step 6 applies $\qop{NQ}$ to quantum register $\ket{o}$. If there is at least one configuration 	with $\ket{b} = \ket{i}$ then the action of $\qop{NQ}$ will set $\ket{o}$ to $\ket{1}$. In this case, Steps 7 to 10  set qubit $\ket{b}$ from a superposed state to the computational basis state $\ket{i}$.

Algorithm~\ref{alg:nonlinearsearch} performs an exhaustive non-linear quantum search in the architecture and weights space. If there is a neural network configuration with the desired performance in initial superposition, the search will return one of these configurations. Otherwise the algorithm does not change the initial superposition and the procedure can be repeated with another performance threshold.

The computational cost of Steps 1 and 2 of SAL is $O(m)$ and depends on the number of neural networks architectures. Steps 3 to 6 has computational cost $O(m+n_w)$, where $n_w$ is the number of qubits to represent the weights. The for starting in Step 7 will be executed $p$ times and each inner line has constant cost. Step 15 is detailed in Algorithm~\ref{alg:nonlinearsearch}.  Steps 3 to 9 of Algorithm~\ref{alg:nonlinearsearch} have constant computational cost and it will be repeated $2\cdot \left( m + n_w \right)$ times. The overall cost of the SAL algorithm is $O(p+m+n_w)$ where $p$ is the number of patterns in the training set.

\section{Discussion}
\label{sec:discussion}

Classical neural networks have limitations, such as  i) the lack of an algorithm to determine optimal architectures, ii) memory capacity and iii) high cost learning algorithms. In this paper, we investigate how to use quantum computing to deal with  limitation iii). To achieve this objective, we define a quantum neural network model named quantum perceptron over a field QPF and a nonlinear quantum learning algorithm that performs an exhaustive search in the space of weights and architectures.

We have shown that previous models of quantum perceptron cannot be viewed as a direct quantization of the classical perceptron. In other models of quantum neural networks weights and inputs are represented by a string of qubits, but  the set of all possible inputs and weights with inner neuron operations does not form a field and there is no guarantee that they are well defined operations. To define QPF we propose quantum operators to perform addition and multiplication such that the qubits in a computational basis form a field with these operations. QPF is the unique neuron with these properties. We claim that QPF can be viewed as a direct quantization of a classical perceptron, since when the qubits are in the computational basis the QPF acts exactly as a classical perceptron. In this way, theoretical results obtained for the classical perceptron remains valid to QPF.

There is a lack of learning algorithms to find the optimal architecture of a neural network to solve a given problem. Methods for searching  near optimal architecture use heuristics and perform  local search in the space of architectures as \emph{eg.} trough evolutionary algorithms or meta-learning.   We propose an algorithm that solves this open problem using a nonlinear quantum search algorithm based on the learning algorithm of the NNQA. The proposed learning algorithm, named SAL, performs a non-linear exhaustive search in the space of architecture and weights and finds the best architecture (in a set of previously defined architectures) in linear time in relation to the number of patterns in the training set. SAL uses quantum superposition to allow initialization of all possible architectures and weights in a way that the architecture search is not biased by a choice of weights configuration. The desired architecture and weight configuration is obtained by the application of the nonlinear search algorithm and we can use the obtained neural network as a classical neural network. The QPF and SAL algorithm extend our theoretical knowledge in learning in quantum neural networks.

Quantum computing is still a theoretical possibility with no actual computer, an empirical evaluation of the QPF in  real world problems is not yet possible, ``quantum computing is far from actual application~\cite{panella:09}''. Studies necessary to investigate the generalization capabilities of SAL algorithm through a cross-validation procedure cannot be accomplished with actual technology. 

The simulation of the learning algorithm on a classical computer is also not possible due to the exponential growth of the memory required to represent quantum operators and quantum bits in a classical computer. Fig.~\ref{fig:memoryrequirement} illustrates the relationship between the number of qubits and the size of memory used to represent a quantum operator.

\begin{figure}
\centering
\includegraphics[scale=0.5]{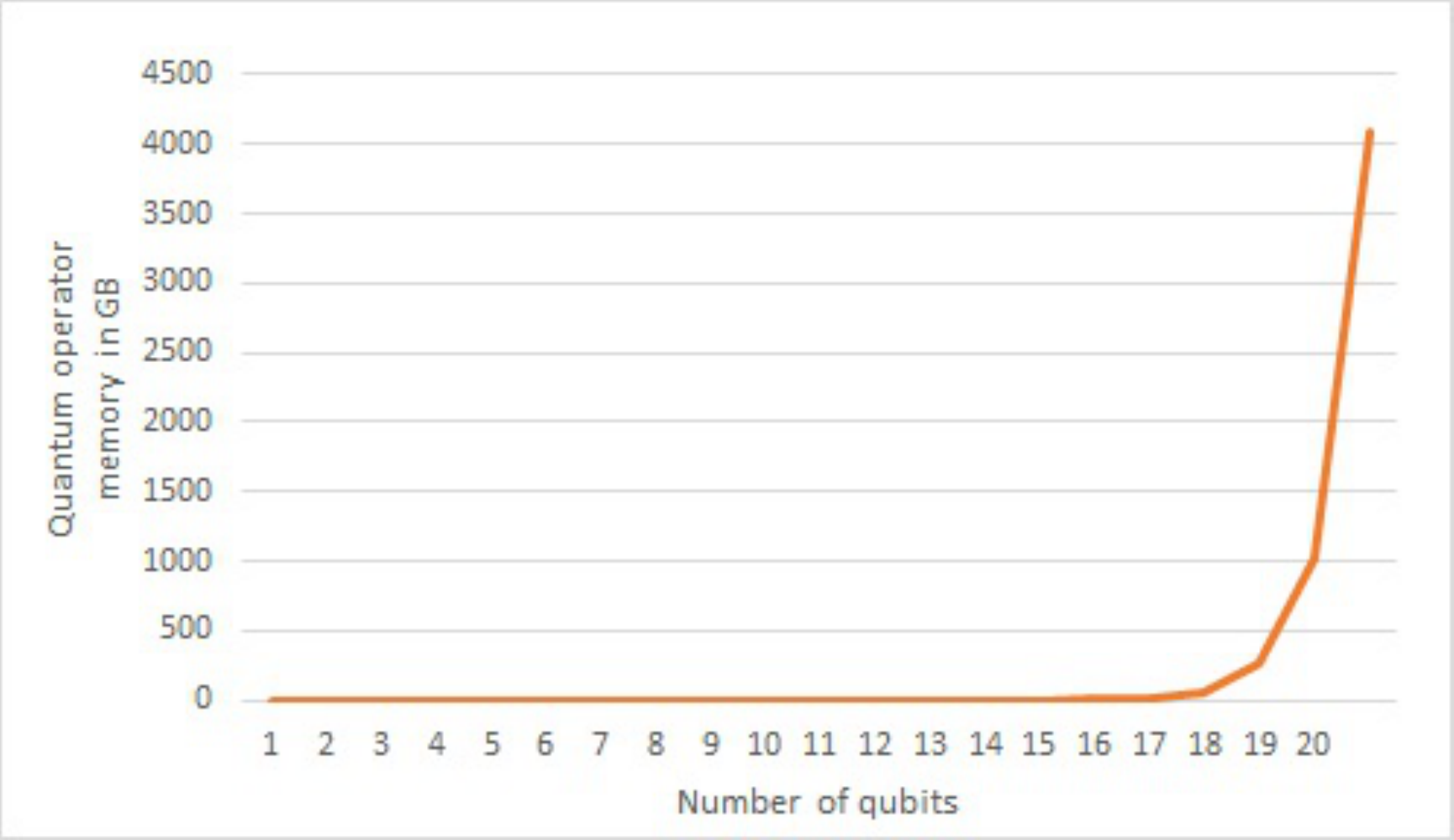}
\caption{Relation between the number of qubits $n$ and size of quantum operators with matricial representation in $\mathbb{R}^{2^n\times 2^n}$ using standard C++ floating point precision}
\label{fig:memoryrequirement}
\end{figure}

To illustrate the impossibility of carrying out empirical analyses of Algorithm \ref{alg:lssa} let us consider the number of qubits necessary to represent a perceptron to learn the Iris dataset\cite{Lichman:2013}. Let $\qop{N}$ be a quantum perceptron over a n-dimensional field $F$, then each attribute of the dataset and each weight of the network will be represented by $n$ quantum bits. Iris database has 4 real entries, then the perceptron will have 4 weights. Excluding auxiliary quantum registers, weights and inputs will be represented by $8n$ quantum bits.
An operator on 8n quantum bits will be represented by a matrix $2^{8n} \times 2^{8n}$. The number of bytes required to represent a  $2^{8n} \times 2^{8n}$ real matrix using the standard C++ floating point data type  is $f(n) = 4\cdot (2^{8n})^2$. 
Note that using only three quantum bits to represent the weights and input data the memory required for simulation is 1024 terabytes. Thus the (quantum or classical) simulation of the learning algorithm in real or synthetic problems is not possible with the current technology.

The multilayer QPF is a generalization of a multilayer classical perceptron and their generalization capabilities are at least equal.  There is an increasing investment in quantum computing by several companies and research institutions to create a general-purpose quantum computer and it is necessary to be prepared to exploit quantum computing power to develop our knowledge in quantum algorithms and models. 

If there are two or more architectures with desired performance in the training set, Algorithm~3 will choose the architecture represented (or addressed) by the string of qubits with more 0’s. This information allows the use of SAL to select the minimal architecture that can learn the training set.

Nonlinear quantum mechanics has been studied since the eighties~\cite{PhysRevLett.62.485} and several neural networks models and learning algorithms used nonlinear quantum computing~\cite{Gupta2001355,panella:09,panella2009neurofuzzy,zhou:12}, but the physical realizability  of nonlinear quantum computing is still controversial~\cite{PhysRevLett.81.3992,raey}. A linear version of SAL needs investigation. The main difficulty is that before step 15 the total probability amplitude of desired  configurations is exponentially smaller than the probability amplitude of undesired configurations. This is an open problem and it may be solved performing some iterations of “classical” learning in the states in the superposition before performing the recovery of the best architecture.

\section{Conclusion}
\label{sec:conclusion}

	We have analysed some models of quantum perceptrons and verified that some of previously defined quantum neural network models in the literature does not respect the principles of quantum computing. Based on this analysis, we presented a new quantum perceptron named quantum perceptron over a field (QPF). The QPF differs from previous models of quantum neural networks since it can be viewed as a direct generalization of the classical perceptron and can be trained by a quantum learning algorithm.

	 We have also defined the architecture of a multilayer QPF and a learning algorithm named Superposition based Architecture Learning algorithm (SAL) that performs a non-linear search in the neural network parameters and the architecture space simultaneously. SAL is based on previous learning algorithms. The main difference of our learning algorithm is the ability to perform a global search in the space of weights and architecture with linear cost  in the number of patterns in the training set and in the number of bits used to represent the neural network. The principle of superposition and a nonlinear quantum operator are used to allow this speedup.

	 The final step of Algorithm~\ref{alg:lssa} is a non-linear search in the architecture and weights space. In this step, free parameters will collapse to a basis state not in superposition. One possible future work is to analyse how one can use the neural network with weights in superposition. In this way, one could take advantage of superposition in a trained neural network.

\section*{Acknowledgments} \noindent  This work is supported by research 
grants from CNPq, CAPES and FACEPE (Brazilian research agencies). We would like to thank the anonymous reviewers for their valuable comments and suggestions to improve the
quality of the paper.

\section*{References}

\bibliography{bibliografia}

\end{document}